\newtheorem{proposition}{Proposition}
\newtheorem{definition}{Definition}
\newtheorem{remark}{Remark}
\newtheorem{theorem}{Theorem}
\newtheorem{example}{Example}
\newtheorem{lemma}{Lemma}
\begin{document}

\title{Matched Metrics to the Binary Asymmetric Channels}

\author{Claudio~M.~Qureshi

\thanks{The author was supported by CNPq (grant 150270/2016-0) and FAPESP (grants 2015/26420-1 and 2013/25977-7) and is with the Institute of Mathematics, Statistics and Computing Science of the University of Campinas, SP , Brazil (email: cqureshi@ime.unicamp.br).}
\thanks{Digital Object Identifier: 10.1109/TIT.2016.xxxxxxx}}

\markboth{}%
{Shell \MakeLowercase{\textit{et al.}}: Bare Demo of IEEEtran.cls for IEEE Communications Society Journals}

\maketitle

\begin{abstract}
In this paper we establish some criteria to decide when a discrete memoryless channel admits a metric in such a way that the maximum likelihood decoding coincides with the nearest neighbor decoding. In particular we prove a conjecture presented by M. Firer and J. L. Walker establishing that every binary asymmetric channel admits a matched metric.
\end{abstract}

\begin{IEEEkeywords}
Binary asymmetric channel, channel model, maximum likelihood decoding
\end{IEEEkeywords}

\IEEEpeerreviewmaketitle

\section{Introduction}  

As it is well known, maximum likelihood decoding (MLD) over a symmetric channel coincides with nearest neighbor decoding (NND) with respect to the Hamming metric. In this paper we deal with the problem of matching a metric to a given channel regarding the decoding criteria mentioned above. This problem was considered in 1967 by J. L. Massey \cite{Massey67} where a metric matched to a discrete memoryless channel is defined as a metric for which the NND is a MLD. Since then, this type of matching have been studied in some special cases. For instance, certain channels matching to the Lee metric were obtained in \cite{CW71}. The problem of matching a metric to a channel was taken up by G. S{\'e}guin \cite{Seguin80}, where the main focus was on sequences of additive metrics. In the referred paper it is used a stronger condition also assumed here: a metric matched to a channel is one for which not only the NND is a MLD but also the MLD is a NND. The author obtains necessary and sufficient conditions for the existence of additive metrics matched to a channel and raises the question of what happens if the restriction of additivity of the metric is removed. There was no significant progress until the paper \cite{FW16} by M. Firer and J. Walker, where the authors proved, among other results, the existence of a metric (not necessarily additive) matched to the Z-channels and to the $n$-fold binary asymmetric channel (BAC) for $n=2,3$, and conjectured that this is also true for $n>3$. Some recent progress in this direction was obtained in \cite{DF16} where it is presented an algorithm to decide if a channel is metrizable and in that case return a metric matched to the channel, and in \cite{Poplawski16} where the author proved that the BAC channels are metrizable in the weaker sense of J. L. Massey in \cite{Massey67}.

The main results of this paper are Theorem \ref{MainTheorem1}, which establishes a necessary and sufficient condition for metrizability of a channel in terms of graph theory, and Theorem \ref{MainTheorem2} which establishes that the BAC channels are metrizable. Other contributions are the association of channels with graphs (which allows the use of techniques from graph theory to approach problems related to channels) and the introduction of a new structure: the colored posets, which may also be useful in other contexts. This work is organized as follows: In Section \ref{SecPreliminaries} we give a brief review of definitions and concepts needed in the development of the paper. In Section \ref{SecGraphG1} we associate a graph with a channel and discuss some results of \cite{FW16} and \cite{Poplawski16} in terms of this graph. In Section \ref{SecColored} we introduce the concept of colored poset which is used to describe an algorithm for constructing a metric matched to a channel whenever it is metrizable. A necessary and sufficient criterion for metrizability of a channel is also derived. In Section \ref{SecBAC} this criterion is used to prove that the BAC channels are metrizable. In Section \ref{SecFinal} we introduce the concept of order of metrizability of a channel and settle some problems related to this.

\section{Preliminaries} \label{SecPreliminaries}

We summarize here some concepts and results to be used in the following sections.

A discrete memoryless channel (simply referred as channel in this paper) $W:\mathcal{X}\rightarrow \mathcal{X}$ is characterized by its transition matrix related to the input and output alphabet $\mathcal{X}=\{x_1,x_2,\ldots, x_{N}\}$. This matrix $[W] \in \mathcal{M}_{N\times N}(\mathbb{R})$ is given by $[W]_{ij}=\mbox{Pr}_{W}(x_i|x_j)$, the probability of receiving $x_{i}$ if $x_{j}$ was sent. When the channel is understood, this conditional probability is denoted by $\mbox{Pr}(x_i|x_j)$. Matrices associated with channels are characterized by the property that every entry is non-negative and the sum of the entries in each column is one.

A channel $W:\mathcal{X}\rightarrow \mathcal{X}$ is metrizable (in the strong sense of \cite{FW16} and \cite{Seguin80}) if there is a metric $d:\mathcal{X}\times \mathcal{X}\rightarrow [0,\infty)$ (i.e. $d$ is a definite-positive symmetric function satisfying the triangle inequality) such that every nearest neighbor decoder is a maximum likelihood decoder and vice versa. This is also equivalent to each of the following statements:
\begin{itemize}
\item[i)] For all $x\in \mathcal{X}$ and every code $C\subseteq \mathcal{X}$ we have $$ \mbox{arg }\max_{y\in C} \mbox{Pr}(x|y)=\mbox{arg }\min_{y\in C} d(x,y) ,$$ where both $\mbox{arg}\max$ and $\mbox{arg}\min$ are interpreted as returning lists of size at least $1$.
\item[ii)] For all $x,y,z \in \mathcal{X}$ the following condition holds: $$\mbox{Pr}(x|y)\leq \mbox{Pr}(x|z) \Leftrightarrow d(x,y)\geq d(x,z).$$
\end{itemize}

In this paper we only deal with reasonable channels (in the sense of \cite{FW16}), that is, channels $W:\mathcal{X}\rightarrow \mathcal{X}$ such that \begin{equation}\label{EqReasonable}
\mbox{Pr}(x|x)> \mbox{Pr}(x|y),\ \forall x, y \in \mathcal{X} \textrm{ with } y\neq x,
\end{equation}
which is a necessary condition for a channel to be metrizable.

Let $W:\mathcal{X} \rightarrow \mathcal{X}$ be a channel, $\mathcal{X}^{(2)}=\{A\subseteq \mathcal{X}: \#A=2\}$ be the family of $2$-subsets of $\mathcal{X}$ and $h:\mathcal{X}^{(2)}\rightarrow [0,+\infty)$ be a non-zero function. We say that $h$ is coherent-with-$W$ if $$\mbox{Pr}(x|y)\leq \mbox{Pr}(x|z) \Leftrightarrow  h(\{x,y\})\leq h(\{x,z\}),$$ for all $x,y,z \in \mathcal{X}$ with $x\neq y$ and $x\neq z$. If such a function exists, we can construct a metric matched to the channel as follows.

\begin{proposition}\label{PropMetricFromh}
Let $W:\mathcal{X}\rightarrow \mathcal{X}$ be a channel and $h:\mathcal{X}^{(2)}\rightarrow [0,+\infty)$ be a coherent-with-$W$ function with maximum value $m=\max\{h(x):x\in \mathcal{X}^{(2)}\}$. The function $d:\mathcal{X}\times \mathcal{X}\rightarrow [0,+\infty)$ given by:
\begin{equation}\label{EqMetricFromf}
d(x,y)=\left\{\begin{array}{ll}
2m-h\left(\{x,y\}\right) & \textrm{if }x\neq y, \\ 0 & \textrm{if }x=y,
\end{array} \right.
\end{equation}
is a metric matched to the channel $W$.
\end{proposition}

\begin{proof}
To prove that $d$ is positive-definite, we note that $d(x,y)=2m-h(\{x,y\})\geq m>0$. The function $d$ is clearly symmetric since $\{x,y\}=\{y,x\}$. To prove triangle inequality, we consider $x,y,z \in \mathcal{X}$ pairwise distinct and note that $d(x,y)+d(y,z)=4m-h(\{x,y\})-h(\{y,z\})\geq 2m\geq 2m-h(\{x,z\})=d(x,z)$. Therefore $d$ is a metric. This metric matches to the channel $W$ because it is reasonable and for $x,y,z \in \mathcal{X}$ pairwise distinct we have $\mbox{Pr}(x|y)\leq \mbox{Pr}(x|z) \Leftrightarrow  h(\{x,y\})\leq h(\{x,z\}) \Leftrightarrow d(x,y)\geq d(x,z)$, where the first equivalence is because $h$ is coherent-with-$W$ and the second by the definition of $d$.
\end{proof}

The binary ($1$-fold) asymmetric channel with parameters $(p,q)\in [0,1]^2$ (denoted by $BAC^{1}(p,q)$) is the channel with input and output alphabet $\mathbb{Z}_2=\{0,1\}$ and conditional probabilities $\mbox{Pr}_{1}(1|0)=p$ and $\mbox{Pr}_{1}(0|1)=q$ (and $\mbox{Pr}_{1}(0|0)=1-p$ and $\mbox{Pr}_{1}(1|1)=1-q$). The $n$-fold binary asymmetric channel $BAC^n(p,q)$ is the channel with input and output alphabet $\mathcal{X}=\mathbb{Z}_{2}^{n}$ and for $x=(x_1,\ldots, x_n)$ and $y=(y_1,\ldots,y_n)$ in $\mathbb{Z}_{2}^{n}$ the conditional probabilities are given by $$\mbox{Pr}(x|y)=\prod_{i=1}^{n}\mbox{Pr}_{1}(x_i|y_i).$$

We remark that the channel $BAC^{n}(p,q)$ verifies condition (\ref{EqReasonable}) if and only if $p+q<1$ (therefore only this case will be considered in this paper). Indeed, for $n=1$ it is obvious and for $n>1$ it is a direct consequence of Equation (\ref{EqCociente}) in Section \ref{SecBAC}. The metrizability of $BAC^n(p,q)$ was established in \cite{FW16} for the case $pq=0$ and $n$ arbitrary (the $n$-fold $Z$-channel) and for $p+q<1$ and $n=2,3$. The remaining case is when $p+q<1$ and $pq>0$. For this case, we prove that the corresponding channels are metrizable in Theorem \ref{MainTheorem2}.

A partially ordered set (or poset) is a pair $(P,\leq)$ where $\leq$ is a partial order relation (i.e. it is reflexive, antisymmetric and transitive). The poset is denoted by $P$ when the order relation is understood. Each poset is associated with a Hasse diagram, which is a representation of the poset in such a way that if $x<y$ the element $y$ is above $x$, and there is a segment connecting these points whenever there is no $z\in P$ with $x<z<y$. A directed graph (or digraph) $G$ is determined by a pair $(V,E)$ where $V$ is a set, called the vertex set, and $E\subseteq V\times V$ is the edge set. When $(v,w)\in E$ we say that the edge $v\to w$ belongs to $G$. A path in $G$ of length $r\geq 0$ is a finite sequence of vertices $c=(v_0,\ldots,v_r)$ such that $v_i\to v_{i+1}$ belongs to $G$ for $0\leq i <r$. If $c'=(w_0,\ldots,w_s)$ is other path in $G$ with $w_0=v_r$, the path $c*c':=(v_0,\ldots,v_r=w_0,w_1,\ldots,w_s)$ is also in $G$ and it is called the concatenation of $c$ and $c'$. The reverse path of $c$ is $\overline{c}=(v_r,\ldots,v_0)$, which is not necessarily a path in $G$. When $r\geq 1$ and $v_0=v_r$, the path $c=(v_0,\ldots,v_r)$ is called a directed cycle. A digraph without directed cycles is called acyclic. From a directed acyclic graph $G=(V,E)$ we have a natural poset structure on $V$ defining $x\leq y$ if there is a (directed) path (of length $r\geq 0$) from $x$ to $y$. When we refer to the Hasse diagram of a directed acyclic graph $G$ we mean the Hasse diagram of their associated poset.

\section{The graph $\mathcal{G}_1$ associated with a channel}\label{SecGraphG1}

We associate with each channel (given by its transition matrix) a graph which plays an important role in the proof of the metrization of the BAC channel.

\begin{definition}
Let $W:\mathcal{X} \rightarrow \mathcal{X}$ be a channel. The digraph $\mathcal{G}_1(W)$ has vertex set $\mathcal{X}^{(2)}$, the family of $2$-subsets of $\mathcal{X}$, and directed edges linking $\{i,j\}$ to $\{i,k\}$ when $\mbox{Pr}(i|j)<\mbox{Pr}(i|k)$.  
\end{definition}

\begin{example}\label{ExFirstExample}
Let $W:\mathcal{X} \rightarrow \mathcal{X}$ be a channel with $\mathcal{X}=\{a,b,c,d\}$ and transition matrix $$[W]=\left( \begin{array}{cccc} 0.4 & 0.3 & 0.1 & 0.2 \\ 0.1 & 0.2 & 0.1 & 0.1 \\ 0.2 & 0.1 & 0.3 & 0.1 \\ 0.3 & 0.4 & 0.5 & 0.6  \end{array}  \right).$$ Denoting by $xy$ the set $\{x,y\}$, the vertex set of $\mathcal{G}_{1}(W)$ is $\mathcal{X}^{(2)}=\{ab,ac,ad,bc,bd,cd\}$. To determine the edges we have to compare the conditional probabilities in each row of $[W]$ (without taking into account the main diagonal). The first row gives us the following information: $\mbox{Pr}(a|c)<\mbox{Pr}(a|d)<\mbox{Pr}(a|b)$ so, we obtain the following edges: $ac\rightarrow ad$, $ac\rightarrow ab$ and $ad \rightarrow ab$. Looking at the second row we have no inequalities among $\mbox{Pr}(b|a),\mbox{Pr}(b|c)$ and $\mbox{Pr}(b|d)$ so, there are no new edges among the vertices $ab,bc$ and $bd$. The third row gives us the inequalities: $\mbox{Pr}(c|b)<\mbox{Pr}(c|a)$ and $\mbox{Pr}(c|d)<\mbox{Pr}(c|a)$ which generate the following new edges $bc\rightarrow ac$ and $cd\rightarrow ac$. Finally, from the fourth row we obtain the inequalities: $\mbox{Pr}(d|a)<\mbox{Pr}(d|b)<\mbox{Pr}(d|c)$ from which we have the new edges $ad \rightarrow bd$, $ad\rightarrow cd$ and $bd \rightarrow cd$. Thus, the graph $\mathcal{G}_{1}(W)$ has $8$ edges and it is represented in Figure \ref{FigFirstExample}.
\end{example}

\begin{figure}[h]
\begin{center}
\includegraphics[scale=0.35]{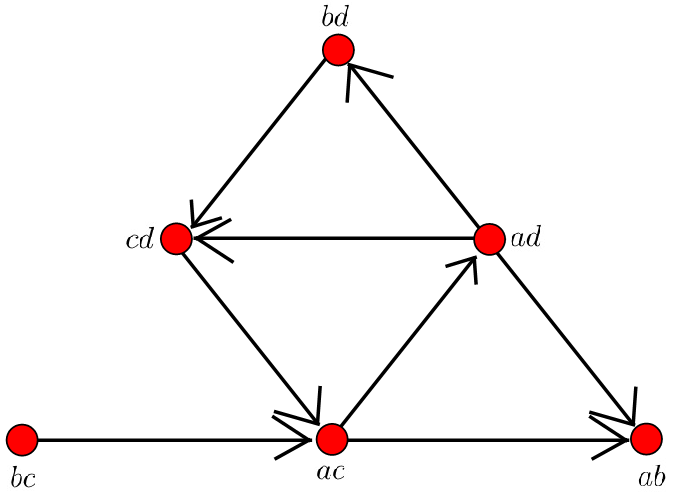} 
\end{center}
\caption{The digraph $\mathcal{G}_{1}(W)$ associated with the channel $W$ of Example \ref{ExFirstExample}.}
\label{FigFirstExample}
\end{figure}

A sufficient condition to guarantee the non-existence of a metric matched to a given channel $W$ is given in Proposition 5 of \cite{FW16}. This condition states that if the channel $W:\mathcal{X}\rightarrow \mathcal{X}$ admits a decision chain of length $r\geq 3$ it is not metrizable. A decision chain of length $r$ is a sequence $x_0,x_1,\ldots,x_{r-1} \in \mathcal{X}$ verifying $\mbox{Pr}(x_i|x_{i-1})<\mbox{Pr}(x_i|x_{i+1})$ for $i:0\leq i <r$, where the indices are taken modulo $r$ (we note that the definition given in \cite{FW16} in terms of $t$-decision region is equivalent to the one given here). Proposition 5 of the referred paper can be rewritten, in terms of the graph $\mathcal{G}_1(W)$, as follows.

\begin{proposition}\label{PropMetrizableAcyclic}
If a channel $W$ is metrizable, then its associated graph $\mathcal{G}_1(W)$ is acyclic. 
\end{proposition}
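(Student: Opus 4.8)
The plan is to prove the contrapositive: assuming $\mathcal{G}_1(W)$ contains a directed cycle, I would exhibit a decision chain of length $r \geq 3$, which by Proposition 5 of \cite{FW16} forbids metrizability. The first step is to unwind the definition of an edge in $\mathcal{G}_1(W)$. A directed edge goes from $\{i,j\}$ to $\{i,k\}$ precisely when $\mbox{Pr}(i|j) < \mbox{Pr}(i|k)$; the key structural observation is that both endpoints share the common element $i$, which plays the role of the fixed ``observed'' symbol, while the second coordinate increases in likelihood along the edge. So a directed path in $\mathcal{G}_1(W)$ threads through vertices that all contain a common $i$, and tracking the varying second coordinates should reconstruct exactly the inequality pattern $\mbox{Pr}(x_i|x_{i-1}) < \mbox{Pr}(x_i|x_{i+1})$ that defines a decision chain.

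Next I would take a directed cycle $\{i_0,j_0\} \to \{i_1,j_1\} \to \cdots \to \{i_{r-1},j_{r-1}\} \to \{i_0,j_0\}$ of minimal length and argue that consecutive vertices must share their common element. Because each edge preserves a common symbol $i$, I would check that this common symbol is in fact constant around the entire cycle: if edge $e_t$ from $\{i,j\}$ to $\{i,k\}$ keeps $i$ fixed and the next edge keeps some symbol fixed, a short case analysis on the two-element sets should force the fixed symbol to propagate, so all cycle vertices share a single common element $x$. Writing the distinct ``second coordinates'' as $y_0, y_1, \ldots$, the edges then read $\mbox{Pr}(x|y_{t-1}) < \mbox{Pr}(x|y_t)$ around the cycle, which is visibly a decision chain of length equal to the cycle length.

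The remaining step is to confirm the length bound $r \geq 3$ and that the resulting sequence genuinely satisfies the decision-chain inequalities with indices taken modulo $r$. Since $\mathcal{G}_1(W)$ has no loops (an edge requires a strict inequality $\mbox{Pr}(i|j) < \mbox{Pr}(i|k)$, forcing $j \neq k$ and hence distinct endpoints) and no $2$-cycle can occur because $\mbox{Pr}(x|y) < \mbox{Pr}(x|z)$ and $\mbox{Pr}(x|z) < \mbox{Pr}(x|y)$ cannot both hold, any directed cycle has length at least $3$. The cyclic inequalities close up automatically because the last edge returns to the starting vertex. Then Proposition 5 of \cite{FW16} applies and yields non-metrizability, completing the contrapositive.

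I expect the main obstacle to be the structural claim that the shared element $x$ is constant around the whole cycle rather than merely shared edge-by-edge. A priori each edge only guarantees a locally common symbol, and one must rule out the possibility that the common symbol ``shifts'' as one traverses the cycle, producing a path that does not correspond to a single fixed observed symbol. I would handle this with a careful combinatorial argument on overlapping $2$-subsets, possibly needing to pass to a minimal cycle to avoid degenerate configurations; this is the step where the precise definition of $\mathcal{G}_1(W)$ must be used most delicately.
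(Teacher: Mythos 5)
Your high-level reduction --- contrapositive, converting a directed cycle of $\mathcal{G}_1(W)$ into a decision chain of length $r\geq 3$ and invoking Proposition 5 of \cite{FW16} --- is exactly the reduction the paper intends, since it presents this proposition as a graph-theoretic rewriting of that result. But the structural claim on which your construction rests is false, and the proof collapses there. You claim that in a minimal cycle the shared symbol propagates, so that all vertices contain one common element $x$ and the edges read $\mbox{Pr}(x|y_0)<\mbox{Pr}(x|y_1)<\cdots<\mbox{Pr}(x|y_{r-1})<\mbox{Pr}(x|y_0)$. First, this pattern is not a decision chain: in a decision chain $\mbox{Pr}(x_i|x_{i-1})<\mbox{Pr}(x_i|x_{i+1})$ the conditioned symbol varies with $i$, whereas yours is fixed. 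Worse, the pattern is self-contradictory (it forces $\mbox{Pr}(x|y_0)<\mbox{Pr}(x|y_0)$ by transitivity of $<$), so if your claim were provable then $\mathcal{G}_1(W)$ would be acyclic for \emph{every} channel, making both the proposition and Proposition 5 of \cite{FW16} vacuous. They are not: a decision chain $x_0,x_1,x_2$ on three distinct symbols (such channels exist) yields the cycle $\{x_0,x_1\}\rightarrow\{x_1,x_2\}\rightarrow\{x_2,x_0\}\rightarrow\{x_0,x_1\}$, whose vertices have empty common intersection. This is the generic shape of a cycle, and no case analysis can propagate a fixed symbol around it.

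What is true --- and what you actually need --- is the opposite statement: in a cycle of minimal length, consecutive edges must have \emph{different} pivots. If $v_i=\{x,y\}\rightarrow v_{i+1}=\{x,z\}\rightarrow v_{i+2}=\{x,t\}$ had the same pivot $x$, then either $v_i=v_{i+2}$, which is impossible (it gives $\mbox{Pr}(x|y)<\mbox{Pr}(x|z)<\mbox{Pr}(x|y)$), or transitivity of $<$ gives the shortcut edge $v_i\rightarrow v_{i+2}$ and hence a shorter closed walk, contradicting minimality. Therefore $v_i\cap v_{i+1}\cap v_{i+2}=\emptyset$ for all $i$, the vertices can be written $v_i=\{x_i,x_{i+1}\}$ for a single sequence $x_0,\ldots,x_{r-1}$, and the edges then say precisely $\mbox{Pr}(x_i|x_{i-1})<\mbox{Pr}(x_i|x_{i+1})$: a genuine decision chain of length $r\geq 3$ (your observations ruling out loops and $2$-cycles give the length bound). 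This is exactly the minimal-cycle analysis the paper carries out in Lemma \ref{LemmaMinimal}, for the larger graph $\mathcal{G}(W)$. Alternatively, you can bypass \cite{FW16} entirely: if $d$ is matched to $W$, property (ii) of matching turns every edge $\{x,y\}\rightarrow\{x,z\}$ of $\mathcal{G}_1(W)$ into the strict inequality $d(x,y)>d(x,z)$, so a cycle would force $d(x_0,y_0)>d(x_1,y_1)>\cdots>d(x_0,y_0)$; this one-line contradiction is the argument used (with non-strict inequalities) in the first half of the proof of Theorem \ref{MainTheorem1}.
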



The following example shows that the converse is false.

\begin{example}\label{ExAcyclicNonMetrizable}
Consider the channel $W:\mathcal{X}\rightarrow \mathcal{X}$ where $\mathcal{X}=\{0,1,2\}$ with transition matrix $$[W]=\left( \begin{array}{ccc} 1/2 & 1/4 & 1/4 \\ 15/36 & 25/36 & 5/9 \\ 1/12 & 1/18 & 7/36  \end{array}  \right).$$ In this case the graph $\mathcal{G}_{1}(W)$ has three vertices and only two edges: $\{0,1\}\rightarrow \{1,2\}$ and $\{1,2\}\rightarrow \{0,2\}$ therefore it is acyclic. However a metric compatible with $W$ should verify $d(0,1)=d(0,2)=d(2,0)<d(2,1)=d(1,2)<d(1,0)=d(0,1)$ which is impossible, therefore $W$ is not metrizable.
\end{example}

\begin{definition}
Let $W$ be a channel. The graph $\mathcal{G}_1(W)$ is transitive if for every path $(v_0, v_1 \ldots , v_{r-1})$ with $\# (v_0 \cap v_{r-1})=1$, the edge $v_0\rightarrow v_{r-1}$ belongs to $\mathcal{G}_{1}(W)$.  
\end{definition}

It is easy to see that if $\mathcal{G}_{1}(W)$ is transitive, then it is acyclic. The converse is false (the same channel of Example \ref{ExAcyclicNonMetrizable} provides a counterexample). Proposition \ref{PropMetrizableAcyclic} can be strengthened as follows.

\begin{proposition}\label{PropMetrizableTransitive}
If a channel $W$ is metrizable, then its associated graph $\mathcal{G}_1(W)$ is transitive. 
\end{proposition}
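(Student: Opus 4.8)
The plan is to use the metric $d$ supplied by metrizability as a real-valued potential on the vertices of $\mathcal{G}_1(W)$. First I would record the translation of edges into inequalities of $d$. By condition (ii), for any triple with $x\neq y$ and $x\neq z$ we have $\mbox{Pr}(x|y)\leq \mbox{Pr}(x|z)\Leftrightarrow d(x,y)\geq d(x,z)$, and taking the contrapositive of each direction yields the strict form $\mbox{Pr}(x|y)< \mbox{Pr}(x|z)\Leftrightarrow d(x,y)> d(x,z)$. Since $d$ is symmetric, I can attach to each vertex $\{a,b\}\in \mathcal{X}^{(2)}$ the well-defined value $d(a,b)$. An edge $\{i,j\}\to\{i,k\}$, which by definition means $\mbox{Pr}(i|j)<\mbox{Pr}(i|k)$, is then exactly the statement $d(i,j)>d(i,k)$; in other words, every edge strictly decreases this potential.

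Next I would telescope along the given path. For a path $v_0\to v_1\to\cdots\to v_{r-1}$, each consecutive pair $v_m\to v_{m+1}$ is joined by an edge, so the two vertices share a common element, and the previous observation gives $d(v_m)>d(v_{m+1})$ (writing $d(\{a,b\})$ for $d(a,b)$). Chaining these strict inequalities produces $d(v_0)>d(v_{r-1})$.

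Finally I would invoke the hypothesis $\#(v_0\cap v_{r-1})=1$ to convert this inequality back into an edge. Writing $v_0=\{i,j\}$ and $v_{r-1}=\{i,k\}$ with $i$ the unique shared element (so $j\neq k$ and neither equals $i$), the inequality $d(i,j)>d(i,k)$ is, by the strict correspondence above, equivalent to $\mbox{Pr}(i|j)<\mbox{Pr}(i|k)$, which is precisely the condition for the edge $v_0\to v_{r-1}$ to belong to $\mathcal{G}_1(W)$. This completes the argument.

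The argument is short because the metric furnishes a global height function that linearly orders the vertices and makes all edges monotone; the genuine content is noticing where the single-intersection hypothesis enters. That hypothesis is the crux: the decrease $d(v_0)>d(v_{r-1})$ holds along \emph{any} path, but it yields an actual edge only when $v_0$ and $v_{r-1}$ share exactly one element, since edges are defined solely between such vertices. I expect the only points requiring care to be the clean passage from the non-strict biconditional in (ii) to its strict form, and the verification that the potential $d(\{a,b\})$ is genuinely independent of the edge under consideration, so that the telescoping is valid.
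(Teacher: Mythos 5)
Your proof is correct. The passage from the non-strict biconditional in condition (ii) to its strict form (apply the biconditional to the triple $(x,z,y)$ and negate both sides) is valid, the symmetry of $d$ makes the vertex potential $d(\{a,b\})=d(a,b)$ well-defined, and the telescoping together with the single-intersection hypothesis correctly recovers the edge $v_0\rightarrow v_{r-1}$.

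Your route differs from the paper's in its packaging, though the engine is the same. The paper gives no standalone proof: it declares the proposition an easy particular case of Theorem \ref{MainTheorem1}, whose forward direction takes a cycle in the larger graph $\mathcal{G}(W)$ (edges encode \emph{non-strict} inequalities $\mbox{Pr}(x|y)\leq\mbox{Pr}(x|z)$), uses the matched metric to chain non-strict inequalities of distances around the cycle, concludes all of them are equalities (so $\mathcal{G}(W)$ is cycle-reverter), and then gets transitivity of $\mathcal{G}_1(W)$ as a consequence: if the edge $v_0\rightarrow v_{r-1}$ failed, the reverse edge $v_{r-1}\rightarrow v_0$ would lie in $\mathcal{G}(W)$, closing a cycle whose reversal would contradict the strict edge $v_0\rightarrow v_1$. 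You instead chain \emph{strict} inequalities along the path itself and never need the auxiliary graph $\mathcal{G}(W)$, the cycle-reverter notion, or an argument by contradiction. What each buys: your argument is self-contained, shorter, and makes transparent exactly where the hypothesis $\#(v_0\cap v_{r-1})=1$ is used; the paper's detour is justified because Theorem \ref{MainTheorem1} is the stronger statement it actually needs (cycle-reverter is a necessary \emph{and sufficient} condition, while transitivity of $\mathcal{G}_1(W)$ is strictly weaker, as Example \ref{ExTransNonMetrizable} shows), so the proposition falls out for free once that machinery exists.
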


This proposition has straightforward verification (it can also be obtained as a particular case of Theorem \ref{MainTheorem1} in Section \ref{SecBAC}). Since every directed acyclic graph can be associated with a poset (in the way mentioned at the end of Section \ref{SecPreliminaries}), we can associate a poset with a channel whenever its associated graph is acyclic. The transitivity of the graph $\mathcal{G}_1(W)$ means that if $v<w$ and $v\cap w \neq \emptyset$, then $v \rightarrow w$ is and edge of $\mathcal{G}_1(W)$.\\

Let $W:\mathcal{X}\rightarrow \mathcal{X}$ be a channel. In terms of the conditional probabilities of $W$, the condition for the graph $\mathcal{G}_1(W)$ to be transitive can be written as follows: $\mathcal{G}_1(W)$ is transitive if and only if every sequence $x_0,x_1,\ldots,x_{r-1}\in \mathcal{X}$ ($r\geq 3$) satisfying $x_{i}\neq x_{i+1}$, $x_0\neq x_{r-1}$ and $\mbox{Pr}(x_{i}|x_{i-1})<\mbox{Pr}(x_{i}|x_{i+1})$ for $0\leq i \leq r-2$ (indices taken modulo $r$) also satisfies $\mbox{Pr}(x_{r-1}|x_{0})< \mbox{Pr}(x_{r-1}|x_{r-2})$. This is exactly the condition proposed in \cite{Poplawski16} to guarantee the existence of a metric $d$ such that 
\begin{equation}\label{EqUnaInclusion}
\mbox{arg }\max_{y\in C} \mbox{Pr}_{W}(x|y) \supseteq \mbox{arg }\min_{y\in C} d(x,y),
\end{equation} for all $C\subseteq \mathcal{X}$ and $x\in \mathcal{X}$ (interpreting both $\mbox{arg }\max$ and  $ \mbox{arg }\min$ as returning list of size at least $1$). Using this condition the author also proves that the BAC channels admit a metric verifying (\ref{EqUnaInclusion}). The reciprocal of Proposition \ref{PropMetrizableTransitive} is also false (in other words it is not possible to prove equality in equation (\ref{EqUnaInclusion}) under the hypothesis of transitivity).

\begin{example}\label{ExTransNonMetrizable}
Let $W:\mathcal{X}\rightarrow \mathcal{X}$ be the channel with alphabet $\mathcal{X}=\{0,1,2,3\}$ and matrix transition $$[W]= \left( \begin{array}{cccc} 0.44 & 0.22 & 0.22 & 0.11 \\ 0.26 & 0.52 & 0.26 & 0.13 \\ 0.12 & 0.08 & 0.16 & 0.04 \\ 0.18 & 0.18 & 0.36 & 0.72 \\
\end{array}  \right).$$ The graph $\mathcal{G}_{1}(W)$ is transitive. This graph and its Hasse diagram is showed in Figure \ref{TransitivoNM}. Every compatible metric should verify $d(2,1)>d(2,0)=d(0,2)=d(0,1)=d(1,0)=d(1,2)=d(2,1)$ which is impossible, then $W$ is not metrizable.
\end{example}

\begin{figure}[h]
\begin{center}
\includegraphics[scale=0.4]{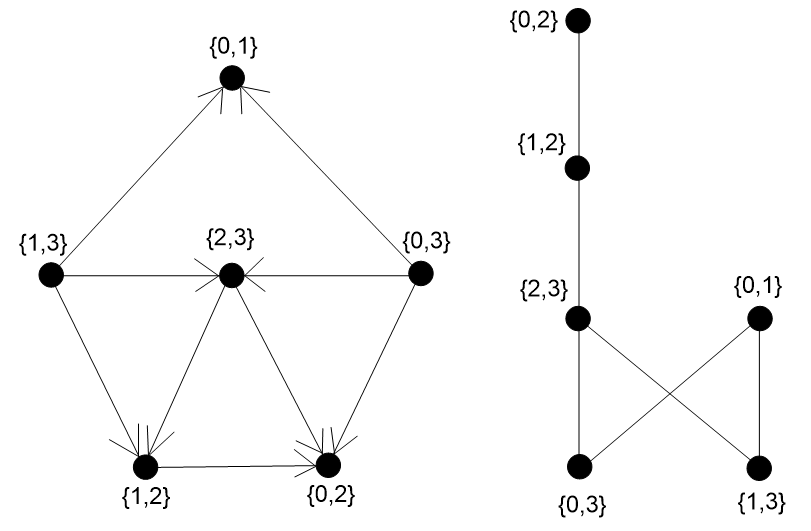} 
\end{center}
\caption{The transitive graph $\mathcal{G}_1(W)$ of Example \ref{ExTransNonMetrizable} (left) and its Hasse diagram (right).}
\label{TransitivoNM}
\end{figure}

\section{Colored posets and a necessary and sufficient condition for the channel to be metrizable}\label{SecColored}

Let $P$ be a poset. A chain of length $r\geq 0$ in $P$ is a finite sequence $(x_0,\ldots,x_r)$ such that $x_{i}<x_{i+1}$ for $0\leq i <r$. The height function relates to each element of a poset $P$, the maximum possible length of a chain ending in such element. In this paper we refer to this function as the standard height of $P$ and we call height function to any function $h:P \rightarrow \mathbb{N}$ verifying $h(x)<h(y)$ whenever $x<y$. A subset $X \subseteq P$ is called horizontal (with respect to $h$) when the restriction of $h$ to $X$ is constant. We say that a height is complete when its image is of the form $[k]=\{0,1,\ldots,k-1\}$ for some $k\in \mathbb{Z}^{+}$. To each height function $h$ we can associate a Hasse diagram such that $y$ is above $x$ if and only if $h(y)>h(x)$. This association establishes a bijection between complete heights and Hasse diagrams (in the sense that we can recover the height from its Hasse diagram).

As remarked in the previous section, when the graph $\mathcal{G}_1(W)$ associated with a channel $W:\mathcal{X}\rightarrow \mathcal{X}$ is acyclic, we can associate with it a poset $P=P(W)$ on the set $\mathcal{X}^{(2)}$. The standard height verifies $h(\{x,y\})<h(\{x,z\})$ whenever $\mbox{Pr}(x|y)<\mbox{Pr}(x|z)$ and from this, assuming the channel is reasonable (i.e. it verifies (\ref{EqReasonable})), we can construct a metric as in Proposition \ref{PropMetricFromh}, verifying $d(x,y)>d(x,z)$ whenever $\mbox{Pr}(x|y)<\mbox{Pr}(x|z)$. In particular this metric verifies (\ref{EqUnaInclusion}) and will be weakly metrizable in the sense of \cite{Poplawski16}. But this metric does not necessarily will match with the channel, since the poset structure of $W$ (when $\mathcal{G}_1(W)$ is transitive) does not give information about when two $2$-subsets $\{x,y\}$ and $\{x,z\}$ verify $\mbox{Pr}(x|y)=\mbox{Pr}(x|z)$ or not, except when they are connected in $\mathcal{G}_1(W)$. We need a more general structure to manage also with these cases.\\

Let $P$ be a poset. A coloration for $P$ is any function $c:P\rightarrow C$ ($C$ is a finite set) verifying that $c(x)\neq c(y)$ if $x<y$. A subset $X \subseteq P$ is monochromatic (with respect to $c$) if the restriction of $c$ to $X$ is constant. In particular every monochromatic set is an antichain of $P$. A colored cycle is a sequence $(x_0,x_1,\ldots,x_{r})$ in $P$ verifying that $x_0=x_r$ and $x_{i}<x_{i+1}$ if $c(x_{i})\neq c(x_{i+1})$ for $0\leq i <r$. Trivial examples of colored cycles are of the form $(x_0,x_1,\ldots,x_{r})$ with $x_0=x_r$ and $c(x_0)=c(x_1)=\cdots =c(x_r)$, we call these cycles monochromatic.

\begin{definition}
A colored poset is a pair $(P,c)$ where $P$ is a poset and $c$ a coloration for $P$ such that every colored cycle in $P$ is monochromatic.
\end{definition}

\begin{example} Consider $P=\{2,3,6,8,12,16\}$ with the divisibility relation (i.e. $x\leq y$ if $x$ divides $y$). Let $c_1$ and $c_2$ be the colorations for $P$ given by $c_1(2)=c_1(3)=\mbox{'blue'}, c_1(8)=c_1(6)=\mbox{'red'}, c_1(12)=c_1(16)=\mbox{'black'}$ and $c_2(2)=c_2(3)=\mbox{'blue'}, c_2(8)=c_2(12)=\mbox{'red'}, c_2(6)=c_2(16)=\mbox{'black'}$. Then, $(P,c_1)$ is a colored poset but $(P,c_2)$ is not, because it contains the non-monochromatic cycle $(8,16,6,12,8)$ (see Figure \ref{FigColoredPoset}).
\end{example}

\begin{figure}[h]
\begin{center}
\includegraphics[scale=0.4]{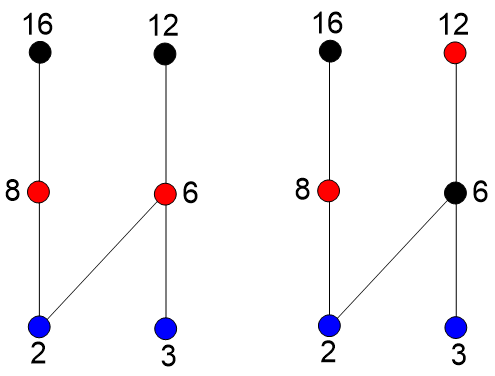} 
\end{center}
\caption{Two different colorations for the divisibility poset $P=\{2,3,6,8,12,16\}$; the first corresponds to a colored poset (left) and the second one does not (right).}
\label{FigColoredPoset}
\end{figure}

Let $P$ be a poset, $c:P\rightarrow C$ be a coloration for $P$ and $G_{P}$ be the digraph associated with $P$ (i.e. its vertex set is $P$ and $x\to y$ belongs to $G_{P}$ if $x<y$). We remark that a colored cycle is not necessarily a cycle in $G_{P}$, however colored cycles have a nice interpretation in term of graphs. Namely, if $G_{P}(c)$ denotes the digraph obtained from $G_{P}$ adding the edges $x\to y$ with $c(x)=c(y)$, the colored cycles of $P$ (with respect to $c$) correspond to cycles of $G_{P}(c)$.

A Hasse diagram for a colored poset $(P,c)$ is a Hasse diagram for $P$ with the additional property that if two points have the same color they are in the same level (i.e. no one is above or below the other). The next proposition guarantee the existence of a Hasse diagram for colored posets.

\begin{proposition}\label{PropExistenceHD}
Let $(P,c)$ be a colored poset. There exists a height function $h$ for $P$ such that every monochromatic subset of $P$ is horizontal with respect to $h$.
\end{proposition}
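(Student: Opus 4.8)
The plan is to produce an $h$ that is constant on each color class; then every monochromatic subset is automatically horizontal, and the whole problem reduces to placing the colors themselves at heights consistently with the order. Concretely, I would look for a function $\bar h : C \to \mathbb{N}$ on the (finite) color set and set $h = \bar h \circ c$. With this form $h$ is trivially constant on monochromatic sets, so the only thing left to arrange is that $h$ is a height function, i.e. that $x < y \Rightarrow \bar h(c(x)) < \bar h(c(y))$.

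First I would record the order information at the level of colors by forming a digraph $H$ with vertex set $C$ and an edge $a \to b$ whenever there exist $x, y \in P$ with $c(x) = a$, $c(y) = b$ and $x < y$ (such $a,b$ are automatically distinct, since $x<y$ forces $c(x)\neq c(y)$). A function $\bar h$ of the desired kind exists precisely when one can assign naturals to the vertices of $H$ strictly increasing along every edge, that is, precisely when $H$ is acyclic; in that case a topological sort (the standard height of $H$) provides $\bar h$, and then $h = \bar h \circ c$ is the required height function.

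The crux is therefore to show that the colored-poset hypothesis forces $H$ to be acyclic, and this is the step I expect to be the main obstacle. I would argue by contradiction: suppose $H$ contains a directed cycle $a_0 \to a_1 \to \cdots \to a_{m-1} \to a_0$, and for each edge $a_i \to a_{(i+1)\bmod m}$ pick witnesses $u_i < v_i$ with $c(u_i)=a_i$ and $c(v_i)=a_{(i+1)\bmod m}$. Interleaving these witnesses gives the sequence
$$ u_0,\, v_0,\, u_1,\, v_1,\, \ldots,\, u_{m-1},\, v_{m-1},\, u_0 , $$
which I claim is a colored cycle. Indeed, at each pair $(u_i, v_i)$ the colors differ and the required inequality $u_i < v_i$ holds by construction, while at each splice $(v_i, u_{i+1})$ (indices mod $m$, so $u_m = u_0$) one has $c(v_i) = a_{(i+1)\bmod m} = c(u_{i+1})$, so the colors agree and no order relation is demanded there. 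Thus the sequence satisfies the defining condition of a colored cycle, yet it is not monochromatic, since $c(u_0)=a_0 \neq a_1 = c(v_0)$ because consecutive vertices of a cycle in $H$ carry distinct colors. This contradicts the hypothesis that every colored cycle in $P$ is monochromatic.

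Hence $H$ is acyclic, a topological sort yields $\bar h : C \to \mathbb{N}$ strictly increasing along edges, and $h = \bar h \circ c$ is a height function that is constant on color classes, hence constant on every monochromatic subset. Once the lifting construction is in place the verification is routine; the only delicate point is checking that the color-equal splices impose no constraint, which is exactly what makes the interleaved walk a legitimate colored cycle.
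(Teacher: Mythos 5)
Your proof is correct, and it reaches the proposition by a genuinely different route than the paper. The paper works on the quotient $P/c$: it defines $X \leq Y$ when some $x \in X$, $y \in Y$ satisfy $x < y$, checks antisymmetry and a restricted transitivity via the colored-cycle hypothesis, passes to the transitive closure to sort the classes, and then runs an induction that repeatedly modifies the standard height of $P$, lifting one class $A_i$ at a time until every class is horizontal. You instead encode the same order data as a digraph $H$ on the colors, prove $H$ is acyclic, and pull back a topological sort along $c$; constancy on color classes (hence horizontality of every monochromatic set) is then automatic, and no lifting induction is needed. The key combinatorial move is shared: turning a cycle among colors into a non-monochromatic colored cycle by interleaving witness pairs. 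But you run it uniformly for cycles of arbitrary length $m$, whereas the paper verifies only antisymmetry (a $2$-cycle, via the sequence $(x,y,y',x',x)$) and a three-term comparability condition before asserting that the transitive closure is a partial order; excluding cycles of \emph{all} lengths is exactly what that assertion needs, so your argument is tighter at the one delicate point of the paper's proof. What the paper's heavier construction buys is the explicit algorithm highlighted in Remark \ref{RemarkHDisConstructive} and the worked examples: it starts from the standard height of $P$ and perturbs it class by class, leaving elements unrelated to the processed classes at their standard heights. Your pullback height is equally constructive (build $H$, topologically sort, compose with $c$) and serves Theorem \ref{MainTheorem1} just as well.
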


\begin{proof}
Let $c:P\rightarrow \{c_1,\ldots,c_k\}$ be the coloration, $A_i=c^{-1}(c_i)$ for $1\leq i \leq k$ and $P/c:=\{A_1,\ldots, A_k\}$. We note that $P/c$ is a partition of $P$ into monochromatic set, with $c(A_i)\neq c(A_j)$ if $i\neq j$ and every monochromatic subset of $P$ is contained in some $A_i$. Thus, it suffices to construct a height function $h$ for $P$ such that every $A_i$ is horizontal. For $X,Y \subseteq P$ we write $X< Y$ if there exist $x\in X$ and $y\in Y$ such that $x<y$. We claim that it is possible to order the indices of the elements of $P/c$ in such a way that if $A_i<A_j$ then $i<j$. Indeed, consider the digraph $G$ whose vertex set is $P/c$ and edges of the form $A_i\to A_j$ with $A_i<A_j$. This digraph is acyclic, because if there is a cycle $c=(A_{i_0},\ldots,A_{i_k})$ in $G$ with $k\geq 1$, then for each $j=1,\ldots,k-1$ there are $x_{i_j}\in A_{j}$ and $y_{i_{j+1}}\in A_{j+1}$ with $x_{i_j}<y_{i_{j+1}}$. Thus, the colored cycle $(x_0,y_1,x_1,y_2,\ldots,x_{k-1},y_{k},x_{0})$ is non-monochromatic (because $x_0<y_1$ implies $c(x_0)\neq c(y_1)$) which is a contradiction since $(P,c)$ is a colored poset. If $(P/c, \preccurlyeq)$ is the poset induced by the acyclic digraph $G$ (i.e. $A_i\preccurlyeq A_j$ if there is a path from $A_i$ to $A_j$ in $G$), clearly $A_i<A_j$ implies $A_i\preccurlyeq A_j$. By extending this poset to a total order we have $A_{i_1}\preccurlyeq A_{i_2}\preccurlyeq \cdots \preccurlyeq A_{i_k}$ where $i_1,i_2,\ldots,i_k$ is a permutation of $1,2,\ldots,k$. Thus we can assume that if $A_i<A_j$ then $i<j$ (ordering indices if necessary).

Next we define inductively an increasing sequence $h_1,h_2,\ldots,h_\kappa$ of heights for $P$ such that $A_j$ is horizontal with respect to $h_i$ if $j\leq i$. We write $x\geq A$ for $x \in P$ and $A \in P/c$ when $x\geq a$ for some $a\in A$ (otherwise we write $x\ngeq A$). We start considering the standard height $h_0$ and $t_1=\max\{h_{0}(x): x \in A_1\}$. We define $h_1$ as follows. 
$$h_1(x)=\left\{ \begin{array}{l}
\!\!\max\{h_0(x)+t_1-h_0(a): a \in A_1, a\leq x\}, \textrm{if }x\geq A_1, \\
\!\! h_0(x) \textrm{, otherwise.}
\end{array} \right.$$ We claim that this function is a height function for $P$ and $h_1(A_1)=\{t_1\}$ (in particular $A_1$ is horizontal with respect to $h_1$). Indeed, let $x,y\in P$ with $x>y$. We consider three cases: (i) $x>y\geq A_1$, (ii) $x\geq A_1$ and $y \ngeqslant A_1$ and (iii) $x,y \ngeqslant A_1$. In the first case, since for every $a\in A_1$ with $a\leq y$ we have $a\leq x$ and $h_0(y)+t_0-h_0(a)<h_0(x)+t_0-h_0(a)\leq h_1(x)$ for all $a\in A_1$ with $a\leq y$, then $h_1(y)< h_1(x)$. In the second and third cases we have $h_1(x)\geq h_0(x)>h_0(y)=h_1(y)$. In all the cases we conclude that $h_1(x)>h_1(y)$, thus $h_1$ is a height for $P$. Moreover, if $a\in A_1$ then $a\geq A_1$, and since $A_1$ is an antichain we have $h_1(a)=h_0(x)+t_1-h_0(a)=t_1$. 

Now we assume that there exists a height function $h_m$ for which $A_1,\ldots, A_m$ are horizontal ($1\leq m < \kappa$) and let $t_{m+1}=\max\{h_{m}(x): x \in A_{m+1}\}$. We define: $$h_{m+1}(x)= \max\{h_m(x)+t_{m+1}-h_m(a): a \in A_m \ , a\leq x\}$$ if $x\geq A_m$ and  $h_{m+1}(x)=h_m(x)$ otherwise. Using a similar argument to the case $m=1$ (considering three cases) we can prove that $h_{m+1}$ is a height function (i.e $h_{m+1}(x) > h_{m+1}(y)$ whenever $x>y$). Since $A_{m+1}$ is an antichain, then $h_{m+1}(a)=t_{m+1}$ for all $a \in A_{m+1}$. Let $x \in A_i$ for some $i$, $1\leq i \leq m$. We have $x \ngeqslant A_{m+1}$ because otherwise we would have $A_{i}\geq A_{m+1}$ with $i<m+1$ which is a contradiction. Therefore $h_{m+1}(a)=h_{m}(a)$ which, by inductive hypothesis, does not depend on $a\in A_i$. In the last step (when $m=k$) we obtain a height function $h_{k}$ for which all the elements of $P/c$ are horizontal. In particular, since every monochromatic subset is contained in some $A_i$, every monochromatic subset is horizontal with respect to $h_{k}$.
\end{proof}

\begin{remark}\label{RemarkHDisConstructive}
Since the proof of Proposition \ref{PropExistenceHD} is constructive, it brings us an algorithm to construct a Hasse diagram for a colored poset $(P,c)$. We start constructing the standard height function for $P$ (first step) and after at most $k$ steps we obtain a height function which induces a Hasse diagram for $(P,c)$, where $k$ is the number of colors. We say 'at most $k$ steps' instead of $k$ steps because when $A_{i+1}$ is horizontal with respect to $h_i$ we have $h_{i+1}=h_{i}$ (this happens for example when $A_{i+1}$ has a unique element) and we can omit this step.
\end{remark}

\begin{example}
Consider the colored poset $(P,c)$ where $P=\mathbb{Z}_{2}\times\mathbb{Z}_{3}$ with the order induced by $00<01<02, 10<11<12$ and $10<01$; and the coloration $c:P\rightarrow \{R,B,G,D\}$ given by $\widetilde{R}=\{00,11\}, \widetilde{B}=\{02,12\}, \widetilde{G}=\{01\}$ and $\widetilde{D}=\{10\}$, where $\widetilde{X}:=c^{-1}(X)$. The relation considered at the beginning of the proof of Proposition \ref{PropExistenceHD} restricted to $P/c$ is: $\widetilde{R}<\widetilde{G}, \widetilde{R}<\widetilde{B}, \widetilde{G}<\widetilde{B}$ and $\widetilde{D}<\widetilde{R}$, which can be extended to the total order $\widetilde{D}\preccurlyeq \widetilde{R}\preccurlyeq \widetilde{G}\preccurlyeq \widetilde{B}$. Therefore defining $A_1=\widetilde{D}, A_2=\widetilde{R}, A_3=\widetilde{G}$ and $A_4=\widetilde{B}$, we have that if $A_i<A_j$ then $i<j$. Figure \ref{FigConstructionHD} shows the different steps for the construction of a Hasse diagram for this colored poset. In the final stage we obtain the height function $h_4:P \rightarrow \mathbb{N}$ given by $h_4(10)=0, h_4(00)=h_4(11)=1, h_4(01)=2$ and $h_4(02)=h_4(12)=3$. 
\end{example}

\begin{figure}[h]
\begin{center}
\begin{tabular}{lll}
\includegraphics[scale=0.5]{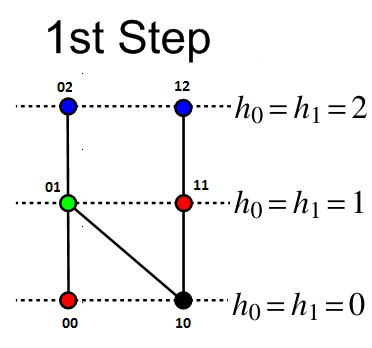} & 
\includegraphics[scale=0.4]{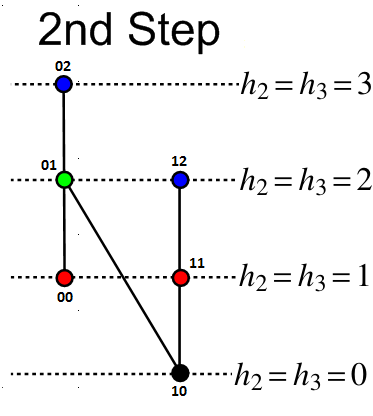} & \includegraphics[scale=0.4]{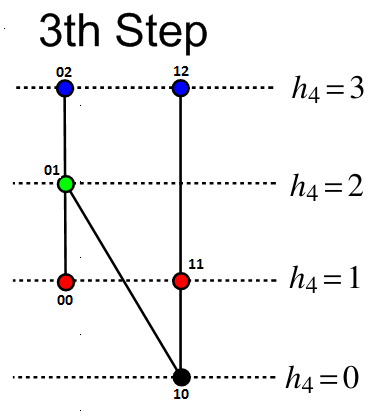}
\end{tabular}
\end{center}
\caption{The construction of a Hasse diagram for a colored poset.}
\label{FigConstructionHD}
\end{figure}

Our next goal is to associate with each channel $W$ (under certain conditions) a colored poset and to construct a metric from a height function for its Hasse diagram. We start by introducing the graphs $\mathcal{G}(W)$ and $\mathcal{G}_{0}(W)$ associated with the channel $W$.

\begin{definition}
Let $W:\mathcal{X}\rightarrow \mathcal{X}$ be a channel and $\mathcal{X}^{(2)}$ be the family of 2-subsets of $\mathcal{X}$. The digraph $\mathcal{G}(W)$ has vertex set $\mathcal{X}^{(2)}$ and directed edges linking $\{x,y\}$ to $\{x,z\}$ if $y\neq z$ and $\mbox{Pr}(x|y)\leq \mbox{Pr}(x|z)$. The graph $\mathcal{G}_{0}(W)$ is a non-directed graph whose vertex set is $\mathcal{X}^{(2)}$ and two vertices $\{x,y\}$ and $\{x,z\}$ are connected by an edge in $\mathcal{G}_{0}(W)$ if $y\neq z$ and $\mbox{Pr}(x|y)=\mbox{Pr}(x|z)$.
\end{definition}

The graph $\mathcal{G}_0(W)$ can be identified with the subgraph of $\mathcal{G}(W)$ whose vertex set is $\mathcal{X}^{(2)}$ and edges $\nu \to \omega$ and $\omega\to \nu$ for each edge $\{\nu,\omega\}$ in $\mathcal{G}_{0}(W)$. By construction the graphs $\mathcal{G}_0(W)$ and $\mathcal{G}_1(W)$ have no common edges. We use this identification in this paper.

\begin{definition}
A digraph $G$ is cycle-reverter if for each cycle $c=( v_0, v_1,   \ldots, v_{r-1}, v_{0})$ in $G$, then the reverse cycle $\overline{c}= ( v_0, v_{r-1},\ldots, v_{1}, v_{0})$ is also in $G$
\end{definition}

We remark that if $\mathcal{G}(W)$ is cycle-reverter then the graph $\mathcal{G}_1(W)$ is transitive (the converse is false) and, in particular, acyclic.

\begin{lemma}\label{LemmaColoredPosetChannel}
Let $W: \mathcal{X}\rightarrow \mathcal{X}$ be a channel such that its associated graph $\mathcal{G}(W)$ is cycle-reverter and let $\mathcal{A}=\{A_1,\ldots, A_{k}\}$ be the set of connected components of $\mathcal{G}_{0}(W)$. Consider for $P:=\mathcal{X}^{(2)}$ the poset structure induced by the graph $\mathcal{G}_1(W)$ and the function $c: P \rightarrow \mathcal{A}$ given by $c(v)=A$ if $v \in A$. Then $(P,c)$ is a colored poset.
\end{lemma}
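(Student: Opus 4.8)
The plan is to reduce everything to a single structural observation about cycles in $\mathcal{G}(W)$ under the cycle-reverter hypothesis. First I would record that, since $\mathcal{G}(W)$ is cycle-reverter, the graph $\mathcal{G}_1(W)$ is acyclic (as noted in the remark preceding the statement), so the relation $\leq$ on $P=\mathcal{X}^{(2)}$ is a genuine partial order and $c$ is well defined on $P$. The key observation I would establish is: \emph{every directed cycle in $\mathcal{G}(W)$ uses only edges of $\mathcal{G}_0(W)$}, i.e.\ no cycle contains a strict ($\mathcal{G}_1$) edge. To prove it, take a simple directed cycle $v_0\to v_1\to\cdots\to v_{s-1}\to v_0$ in $\mathcal{G}(W)$; cycle-reverter yields the reverse cycle $v_0\to v_{s-1}\to\cdots\to v_1\to v_0$ in $\mathcal{G}(W)$ as well, so for each $i$ both $v_i\to v_{i+1}$ and $v_{i+1}\to v_i$ are edges of $\mathcal{G}(W)$. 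Writing $v_i=\{x,y\}$ and $v_{i+1}=\{x,z\}$, the presence of both edges forces $\mbox{Pr}(x|y)\leq\mbox{Pr}(x|z)$ and $\mbox{Pr}(x|z)\leq\mbox{Pr}(x|y)$, hence equality; thus $v_i\to v_{i+1}$ is an edge of $\mathcal{G}_0(W)$, proving the observation.

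Next I would verify that $c$ is a coloration, i.e.\ that $v<w$ implies $c(v)\neq c(w)$; this is exactly the statement, already recorded before the lemma, that each connected component of $\mathcal{G}_0(W)$ is an antichain of $P$. Should one prefer to re-derive it from the observation: if $v<w$ with $v,w$ in a common $\mathcal{G}_0$-component, then concatenating a strict directed $\mathcal{G}_1$-path from $v$ to $w$ with a reversed $\mathcal{G}_0$-path from $w$ back to $v$ (equality edges are traversable in both directions in $\mathcal{G}(W)$) produces a closed walk in $\mathcal{G}(W)$ carrying a strict edge, which contradicts the observation after decomposition into simple cycles.

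For the main assertion—that every colored cycle is monochromatic—I would argue by contradiction using the same mechanism. Given a colored cycle $(x_0,x_1,\ldots,x_r)$ with $x_0=x_r$, I lift it to a closed directed walk $C$ in $\mathcal{G}(W)$: a same-color step $x_i,x_{i+1}$ (elements of a common $\mathcal{G}_0$-component) is replaced by a directed $\mathcal{G}_0$-path from $x_i$ to $x_{i+1}$, while a different-color step, for which by definition $x_i<x_{i+1}$, is replaced by a strict directed $\mathcal{G}_1$-path from $x_i$ to $x_{i+1}$. If the cycle were not monochromatic, at least one step would change color, so $C$ would contain at least one strict edge. Decomposing the closed walk $C$ into vertex-simple directed cycles (a standard fact for closed walks in digraphs), the offending strict edge would lie in one such simple cycle of $\mathcal{G}(W)$, contradicting the observation that every cycle of $\mathcal{G}(W)$ consists only of $\mathcal{G}_0$ edges. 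Hence every colored cycle is monochromatic and $(P,c)$ is a colored poset.

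The step I expect to demand the most care is this lifting-and-decomposition argument: one must phrase precisely how a colored cycle (a combinatorial object living in the poset, where a ``step'' may be a comparability $x_i<x_{i+1}$ realized by an entire $\mathcal{G}_1$-path rather than a single edge) becomes an honest closed walk in $\mathcal{G}(W)$, and then invoke the decomposition of a closed walk into simple cycles so that the cycle-reverter observation, stated for cycles, applies to the single strict edge. Once the observation ``every cycle of $\mathcal{G}(W)$ is made of equality edges'' is in hand, everything else is bookkeeping.
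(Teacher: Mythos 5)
Your proof is correct and follows essentially the same route as the paper's: both arguments reduce the coloration property and the monochromaticity of colored cycles to the fact that, under the cycle-reverter hypothesis, a closed walk in $\mathcal{G}(W)$ built by concatenating $\mathcal{G}_1$- and $\mathcal{G}_0$-paths cannot contain a strict edge, so every comparability step leads to a contradiction. The only difference is one of rigor rather than of method: the paper applies the cycle-reverter property directly to the concatenated (possibly non-simple) closed walk, whereas you first isolate the key observation for simple cycles and then decompose the walk into simple cycles --- a sensible precaution, since the paper's definition of cycle-reverter does not say whether repeated vertices are allowed.
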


\begin{proof}
First we prove that $c$ is a coloration for $P$. Let $v,w \in P$ such that $v<w$, then there exists a path $p_1=(v_{0}=v, v_1, \ldots , v_{r}= w)$ (with $r\geq 1$) in $\mathcal{G}_1(W)$. In particular $v=\{x,y\}$ and $v_1=\{x,z\}$ with $\mbox{Pr}(x|y)<\mbox{Pr}(x|z)$. We suppose, to the contrary, that $c(v)=c(w)$. Since $v$ and $w$ belong to the same connected component in $\mathcal{G}_0(W)$ there is a path $p_2$ from $w$ to $v$ in $\mathcal{G}_{0}(W)$. Since $\mathcal{G}(W)$ is cycle-reverter, the reverse of the cycle $p_1*p_2$ is also in $\mathcal{G}(W)$. In particular the arrow $v_1 \rightarrow v$ is in $\mathcal{G}(W)$ and then $\mbox{Pr}(x|z)\geq \mbox{Pr}(y|z)$, which is a contradiction. Now we prove that $(P,c)$ is a colored poset. Consider a colored cycle $C=(v_0,v_1,\ldots,v_r)$ with $v_0=v_r$.
If $v_i<v_{i+1}$ there is a path from $v_i$ to $v_{i+1}$ in $\mathcal{G}_1(W)$ and if $c(v_i)=c(v_{i+1})$ there is a path from $v_i$ to $v_{i+1}$ in $\mathcal{G}_0(W)$. In both cases there is a 
path $p_i$ from $v_i$ to $v_{i+1}$ in $\mathcal{G}(W)$ for $0\leq i <r$. Since the graph $\mathcal{G}(W)$ is cycle-reverter the reverse of the cycle $p_1*p_2* \ldots *p_{r-1}$ is a cycle in $\mathcal{G}(W)$. Therefore none of the paths $p_i$ can be in $\mathcal{G}_1(W)$ and we conclude that all the vertices $v_i$ belong to the same connected component in $\mathcal{G}_0(W)$, then $C$ is monochromatic. 
\end{proof}

The next theorem establishes a necessary and sufficient condition for the existence of a metric matched to a given channel.

\begin{theorem}\label{MainTheorem1}
Let $W: \mathcal{X}\rightarrow \mathcal{X}$ be a channel. The graph $\mathcal{G}(W)$ is cycle-reverter if and only if the channel $W$ is metrizable.
\end{theorem}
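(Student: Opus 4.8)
The plan is to prove the two implications separately. The forward implication (metrizable $\Rightarrow$ cycle-reverter) follows quickly from characterization (ii) of metrizability, while the reverse implication (cycle-reverter $\Rightarrow$ metrizable) is where the colored-poset machinery built in Lemma \ref{LemmaColoredPosetChannel} and Proposition \ref{PropExistenceHD} pays off.

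For the forward direction, suppose $W$ is metrizable with matched metric $d$. Since $d$ is symmetric it descends to a well-defined function on $\mathcal{X}^{(2)}$, and by (ii) a directed edge $\{x,y\}\to\{x,z\}$ of $\mathcal{G}(W)$ (i.e. $\mbox{Pr}(x|y)\leq\mbox{Pr}(x|z)$) holds exactly when $d(\{x,y\})\geq d(\{x,z\})$. Hence along any directed edge the value of $d$ weakly decreases. Given a cycle $v_0\to v_1\to\cdots\to v_{r-1}\to v_0$ we therefore obtain $d(v_0)\geq d(v_1)\geq\cdots\geq d(v_{r-1})\geq d(v_0)$, which forces all these values to coincide. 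Consequently every edge of the cycle in fact carries $\mbox{Pr}$-equality, so each reverse edge $v_{i+1}\to v_i$ is also present and the whole reverse cycle lies in $\mathcal{G}(W)$; thus $\mathcal{G}(W)$ is cycle-reverter.

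For the reverse direction, assume $\mathcal{G}(W)$ is cycle-reverter. By Lemma \ref{LemmaColoredPosetChannel} the pair $(P,c)$ --- with $P=\mathcal{X}^{(2)}$ ordered by $\mathcal{G}_1(W)$ and $c$ colouring by the connected components of $\mathcal{G}_0(W)$ --- is a colored poset, and by Proposition \ref{PropExistenceHD} there is a height function $h:P\to\mathbb{N}$ whose monochromatic subsets are horizontal. I would then set $f=h$, viewed as a (automatically symmetric) function on $\mathcal{X}^{(2)}$, and check that $f$ is coherent with $W$. Coherence reduces, for fixed $x$ and $y,z\neq x$, to showing that $\mbox{Pr}(x|y)<\mbox{Pr}(x|z)$ forces $f(x,y)<f(x,z)$ and that $\mbox{Pr}(x|y)=\mbox{Pr}(x|z)$ forces $f(x,y)=f(x,z)$. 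The strict case uses that such an inequality produces an edge in $\mathcal{G}_1(W)$, hence $\{x,y\}<\{x,z\}$ in $P$ and $h(\{x,y\})<h(\{x,z\})$; the equality case uses that $\mbox{Pr}(x|y)=\mbox{Pr}(x|z)$ places $\{x,y\},\{x,z\}$ in a common component of $\mathcal{G}_0(W)$, hence in a monochromatic --- thus horizontal --- set, giving equal heights. The two reverse implications of coherence then come for free by trichotomy: the three mutually exclusive and exhaustive cases for $\mbox{Pr}$ are carried to the three cases for $f$, so the correspondence is a bijection. Once $f$ is coherent and symmetric, the metric defined by (\ref{EqMetricFromf}) is matched to $W$, establishing metrizability.

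The only genuinely delicate point is the reverse direction, and within it the verification of coherence --- in particular the equality case, which is exactly what the horizontality guaranteed by Proposition \ref{PropExistenceHD} is designed to supply (an arbitrary height function would not suffice, since it need not assign equal values to $\mbox{Pr}$-tied pairs). The forward direction, by contrast, is essentially a one-line consequence of the cycle inequality $d(v_0)\geq\cdots\geq d(v_0)$ collapsing to a chain of equalities. I expect no difficulty beyond correctly bookkeeping that every edge of $\mathcal{G}(W)$ joins two $2$-subsets sharing a common symbol, which is precisely what lets me treat both $d$ and $f$ as functions on $\mathcal{X}^{(2)}$.
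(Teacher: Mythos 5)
Your proposal is correct and follows essentially the same route as the paper: the forward direction via the chain of (in)equalities of $d$ around a cycle collapsing to equalities, and the reverse direction via Lemma \ref{LemmaColoredPosetChannel}, the height function of Proposition \ref{PropExistenceHD}, and the construction of Equation (\ref{EqMetricFromf}) from a symmetric coherent function. Your explicit trichotomy argument for coherence just spells out what the paper asserts in one line, so there is no substantive difference.
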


\begin{proof}
First we suppose the existence of a metric $d:\mathcal{X}\times \mathcal{X}\rightarrow [0,+\infty)$ matched to $W$ and consider a cycle $c=(v_0, v_1, \ldots, v_r= v_{0})$ in $\mathcal{G}(W)$. Every vertex is of the form $v_{i}=\{x_i,y_i\}$ with $x_i,y_i \in \mathcal{X}$ and $\# (v_{i}\cap v_{i+1})=1$ for $0\leq i < r$ (indices taken modulo $r$). Since $d$ matches with $W$, from the cycle $c$, we obtain the following chain of inequalities: $$d(x_0,y_0)\leq d(x_1,y_1)\leq \cdots \leq d(x_{r-1},y_{r-1}) \leq d(x_0,y_0).$$ Therefore every inequality is actually an equality and the reverse cycle $\overline{c}$ is also in $\mathcal{G}(W)$. This proves that the graph $\mathcal{G}(W)$ is cycle-reverter whenever $W$ is metrizable. Conversely, if $\mathcal{G}(W)$ is cycle-reverter then by Lemma \ref{LemmaColoredPosetChannel} we can define in $P=\mathcal{X}^{(2)}$ a colored poset structure where the order is induced by the graph $\mathcal{G}_1(W)$ and the connected components of $\mathcal{G}_0(W)$ are monochromatic. By 
Proposition \ref{PropExistenceHD} we can construct a height function $h$ for $P$ such that every connected component of $\mathcal{G}_0(W)$ is horizontal. In particular $\mbox{Pr}(x|y)<\mbox{Pr}(x|z)$ if and only if $h(\{x,y\})<h(\{x,z\})$. Hence, the function $(x,y)\mapsto h(\{x,y\})$ is coherent-with-$W$, then $W$ is metrizable (a metric can be constructed as in Equation (\ref{EqMetricFromf})).
\end{proof}

\begin{remark}
When $\mathcal{G}(W)$ is cycle-reverter we have the following algorithm to obtain a metric $d$ matching to the channel $W: \mathcal{X}\rightarrow\mathcal{X}$.
\begin{enumerate}
\item Consider the colored poset in $P=\mathcal{X}^{(2)}$ whose partial order is induced by $\mathcal{G}_1(W)$ and the coloring is given by the connected components of $\mathcal{G}_0(W)$.
\item Construct a height function $h$ for $P$ as in the proof of Proposition \ref{PropExistenceHD} (see also Remark \ref{RemarkHDisConstructive}) for which every monochromatic set is horizontal.
\item Let $m$ be the maximum value of $h$. By Proposition \ref{PropMetricFromh}, a metric matched to $W$ is given by $$d(x,y)=\left\{\begin{array}{ll}
2m-h\left(\{x,y\}\right) & \textrm{if }x\neq y \\
0 & \textrm{if }x=y
\end{array}  \right.$$
\end{enumerate}
\end{remark}

\begin{example}\label{ExGettingaMetric}
Let $\mathcal{X}=\{0,1,2,3\}$ and $W:\mathcal{X} \rightarrow \mathcal{X}$ be the channel with transition matrix: $$[W]=\left( \begin{array}{cccc} 4/9 & 0 & 0 & 2/9 \\ 0 & 2/9 & 1/9  & 1/9 \\ 2/9  & 4/9 & 8/9 & 0 \\ 1/3 & 1/3 & 0  & 2/3  \end{array} \right).$$ We consider for $\mathcal{X}^{(2)}$ the order induced by $\mathcal{G}_{1}(W)$ and the coloration $c$ as in Lemma \ref{LemmaColoredPosetChannel} (i.e. each color correspond to a connected component of $\mathcal{G}_{0}(W)$). Figure \ref{FigMetrizationOfW} shows the graph $\mathcal{G}(W)$ where the edges corresponding to the subgraphs $\mathcal{G}_{1}(W)$ and $\mathcal{G}_{0}(W)$ are colored black and red respectively. Colored cycles correspond to cycles in $\mathcal{G}(W)$. Note that  a cycle and its reverse are in $\mathcal{G}(W)$ if and only if it is a cycle in $\mathcal{G}_{0}(W)$. Since there are no cycles in $\mathcal{G}(W)$ containing black edges, the graph $\mathcal{G}(W)$ is cycle-reverter. Thus, by Lemma \ref{LemmaColoredPosetChannel}, $(\mathcal{X}^{(2)},c)$ is a colored poset. We can apply the steps given in the proof of Proposition \ref{PropExistenceHD} to obtain a Hasse diagram for this colored poset. This process is illustrated in Figure \ref{FigMetrizationOfW}, after three steps we obtain the height function $h_3: \mathcal{X}^{(2)}\rightarrow \mathbb{N}$ given by $h_3\left( \{2,3\} \right)=0$, $h_3\left( \{0,1\} \right)=h_3\left( \{0,2\} \right)=1$ and $h_3\left( \{1,3\} \right)=h_3\left( \{0,3\} \right)=h_3\left( \{1,2\} \right)=3$. A metric matched to $W$ is given by $d(x,y)=6-h_3(\{x,y\})$ when $x\neq y$ and $0$ otherwise.
\end{example}
\vspace{-4mm}

\begin{figure}[h]
\begin{center}
\begin{tabular}{ll}
\includegraphics[scale=0.4]{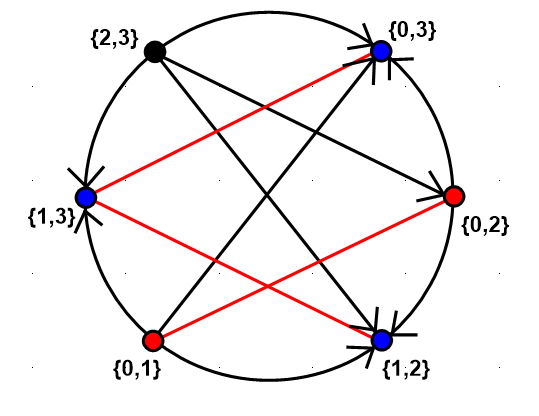} & \includegraphics[scale=0.4]{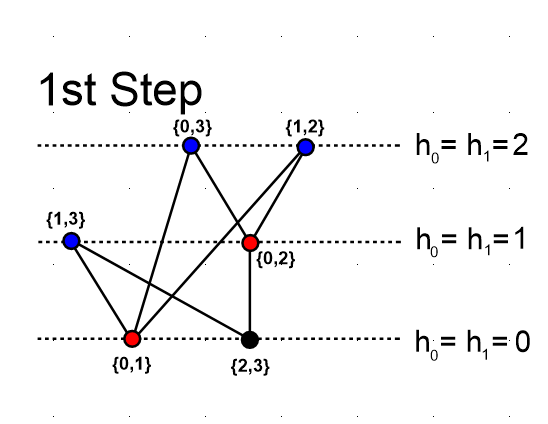} \\
\includegraphics[scale=0.4]{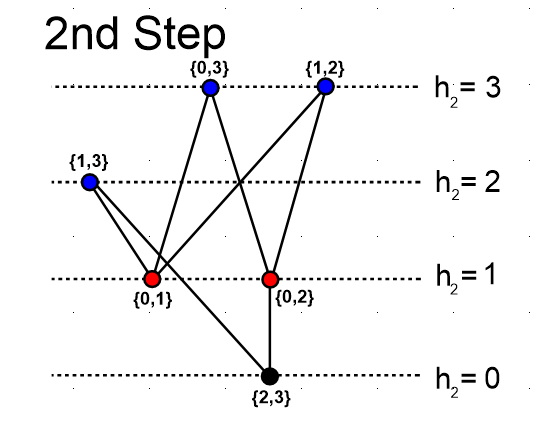} & 
\includegraphics[scale=0.4]{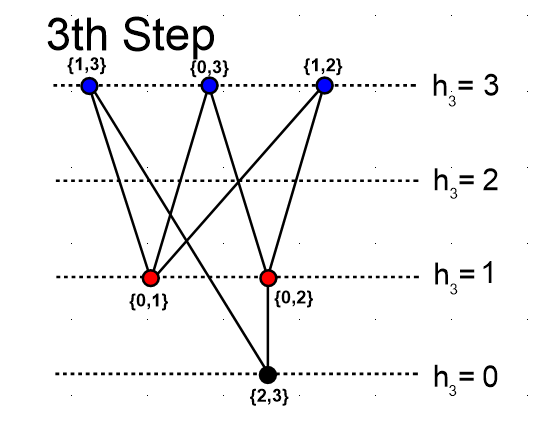}
\end{tabular}
\end{center}
\caption{The graph $\mathcal{G}(W)$ for the channel of Example \ref{ExGettingaMetric} and the steps to obtain a metric matched to this channel.}
\label{FigMetrizationOfW}
\end{figure}

\section{The BAC channel is metrizable}\label{SecBAC}

We consider the $n$-fold BAC channel $BAC^n(p,q)$ with parameters $p, q \in [0,1]$ and $p+q<1$. The case $pq=0$ corresponds to the $Z$-channels which we know they are metrizable (Theorem 6 of \cite{FW16}), then we can assume $pq>0$. Each entry of its transition matrix $M_n(p,q)$ is of the form $$\mbox{Pr}(x|y)=p^a(1-p)^b q^c(1-q)^d$$ with $a+b+c+d=n$ and $a+d=w(x)$ (the Hamming weight of $x\in \mathbb{Z}_2^{n}$). If we consider other word $y'\in\mathbb{Z}_2^{n}$ with $\mbox{Pr}(x|y')=p^{a'}(1-p)^{b'} q^{c'}(1-q)^{d'}$, since $a+d=a'+d'$ and $b+c=b'+c'$, taking the quotient we have:
\begin{equation}\label{EqCociente}
\frac{\mbox{Pr}(x|y)}{\mbox{Pr}(x|y')}= \left(\frac{1-p}{q} \right)^{b-b'}\cdot \left(\frac{1-q}{p} \right)^{d-d'}.
\end{equation}
This identity will be useful in our proof of metrizability of the BAC channel.

\begin{lemma}\label{LemmaMinimal}
Let $W:\mathcal{X}\rightarrow \mathcal{X}$ be a channel. The graph $\mathcal{G}(W)$ is cycle-reverter if and only if every sequence $x_0,x_1,\ldots,x_{r-1}\in \mathcal{X}$ ($r\geq 3$) satisfying $x_i\neq x_{i+1}$ and $\mbox{Pr}(x_{i}|x_{i-1})\leq \mbox{Pr}(x_{i}|x_{i+1})$ for $i:0\leq i <r$ also satisfy $\mbox{Pr}(x_{i}|x_{i-1})= \mbox{Pr}(x_{i}|x_{i+1})$ for $i:0\leq i <r$ (where the indices are considered modulo $r$).
\end{lemma}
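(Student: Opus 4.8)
The plan is to translate the combinatorial cycle-reverter condition on $\mathcal{G}(W)$ into the probabilistic condition on sequences, by setting up an explicit dictionary between closed walks in $\mathcal{G}(W)$ and sequences in $\mathcal{X}$. Given a sequence $x_0,x_1,\ldots,x_{r-1}$ (indices mod $r$) with $x_i\neq x_{i+1}$, I would set $v_i=\{x_{i-1},x_i\}$. Then $v_i$ and $v_{i+1}=\{x_i,x_{i+1}\}$ share exactly the element $x_i$, and by the definition of $\mathcal{G}(W)$ the arrow $v_i\rightarrow v_{i+1}$ is present precisely when $\mbox{Pr}(x_i|x_{i-1})\leq \mbox{Pr}(x_i|x_{i+1})$; moreover both arrows $v_i\rightarrow v_{i+1}$ and $v_{i+1}\rightarrow v_i$ are present precisely when $\mbox{Pr}(x_i|x_{i-1})=\mbox{Pr}(x_i|x_{i+1})$, i.e. the corresponding edge lies in $\mathcal{G}_0(W)$. (A degenerate step with $x_{i-1}=x_{i+1}$ gives $v_i=v_{i+1}$ and forces equality for free, so such steps may be contracted.) Thus a sequence satisfying the hypothesis of the lemma is exactly a closed walk $v_0\rightarrow\cdots\rightarrow v_{r-1}\rightarrow v_0$ in $\mathcal{G}(W)$, and its desired conclusion says exactly that every arrow of this walk also occurs reversed.

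Next I would reformulate the cycle-reverter property itself. Since an arrow of $\mathcal{G}(W)$ together with its reverse is present iff the two conditional probabilities are equal, $\mathcal{G}(W)$ is cycle-reverter iff no directed cycle of $\mathcal{G}(W)$ uses a \emph{strict} arrow (an arrow of $\mathcal{G}_1(W)$). I would then upgrade ``cycle'' to ``closed walk'': decomposing any closed walk into simple directed cycles shows that $\mathcal{G}(W)$ is cycle-reverter iff no closed walk of $\mathcal{G}(W)$ contains a strict arrow. With this reformulation the direction cycle-reverter $\Rightarrow$ sequence condition is immediate, since a lemma-type sequence is a closed walk by the dictionary, hence contains no strict arrow, so all its inequalities are equalities.

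For the converse I would argue by contraposition: assuming $\mathcal{G}(W)$ is not cycle-reverter, there is a cycle, hence a closed walk, containing a strict arrow, and I must produce a \emph{sequence} violating the conclusion. The difficulty is that a general walk need not have the alternating shape $v_i=\{x_{i-1},x_i\}$ coming from a sequence, because consecutive arrows may share the same reference element (e.g. $\{a,p\}\rightarrow\{a,q\}\rightarrow\{a,r\}$). To repair this I would merge each maximal run of consecutive arrows sharing a common reference element $a$, say $\{a,w_0\}\rightarrow\cdots\rightarrow\{a,w_\ell\}$, into the single arrow $\{a,w_0\}\rightarrow\{a,w_\ell\}$, which is present because $\mbox{Pr}(a|w_0)\leq\cdots\leq\mbox{Pr}(a|w_\ell)$ and is strict whenever the run contained a strict arrow (monotonicity prevents cancellation). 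After merging, consecutive arrows have distinct reference elements, so the reduced closed walk is precisely of the form induced by a sequence $x_0,\ldots,x_{r-1}$ with $x_i\neq x_{i+1}$, and it still contains a strict arrow, giving the required violation.

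The main obstacle, and the only genuinely delicate point, is controlling the length of the reduced walk, since the lemma requires $r\geq 3$. I would rule out the short cases directly: a single run would force the walk to return to its start along non-decreasing probabilities, hence all equal, contradicting the presence of a strict arrow; and a reduced walk of length two with distinct reference elements can be shown to force its two vertices to coincide, which is impossible for an arrow of $\mathcal{G}(W)$. Hence the reduced walk has length $r\geq 3$, completing the contrapositive and thereby the equivalence.
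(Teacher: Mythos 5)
Your proposal is correct and takes essentially the same approach as the paper: the same dictionary between sequences and closed walks in $\mathcal{G}(W)$, the same characterization of the cycle-reverter property via strict arrows (arrows of $\mathcal{G}_1(W)$), and the same key reduction of composing consecutive arrows that share a reference element, with monotonicity preserving strictness. The only differences are presentational---the paper packages your iterative merging as a minimal-counterexample argument (in a shortest cycle whose reverse is absent, vertices are distinct and no three consecutive vertices share a common element)---and your explicit contraction of degenerate steps $x_{i-1}=x_{i+1}$ handles a case the paper's forward direction silently skips.
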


\begin{proof}
We suppose that $\mathcal{G}(W)$ is cycle-reverter and consider a sequence $x_0,x_1,\ldots,x_{r-1} \in \mathcal{X}$ satisfying $x_i\neq x_{i+1}$ and $\mbox{Pr}(x_{i}|x_{i-1})\leq \mbox{Pr}(x_{i}|x_{i+1})$ for $i:0\leq i <r$. Then we have a cycle $c=(v_0, \ldots , v_{r-1}, v_{r} =v_{0})$ in $\mathcal{G}(W)$ given by $v_i=\{x_i,x_{i+1}\}$ for $0\leq i <r$. Since this graph is cycle-reverter its reverse cycle $\overline{c}$ is also a cycle in $\mathcal{G}(W)$ which implies $\mbox{Pr}(x_{i}|x_{i-1})= \mbox{Pr}(x_{i}|x_{i+1})$ for $i:0\leq i <r$. Now we suppose that the graph $\mathcal{G}(W)$ is not cycle-reverter. In this case we can find a sequence $x_0,x_1,\ldots,x_{r-1}\in \mathcal{X}$ satisfying $x_i\neq x_{i+1}$ and $\mbox{Pr}(x_{i}|x_{i-1})\leq \mbox{Pr}(x_{i}|x_{i+1})$ for $i:0\leq i <r$ where at least one inequality is strict. Indeed, consider a cycle $c=( v_0, v_1, \ldots, v_r= v_0)$ in $\mathcal{G}(W)$ of minimal length $r\geq 1$ whose reverse cycle $\overline{c}$ is not in this graph. Since $c=\overline{c}$ for cycles of length $r\leq 2$, we have $r\geq 3$. We remark that the fact that its reverse cycle $\overline{c}$ is not in $\mathcal{G}(W)$ is equivalent to the existence of some arrow in $c$ which is also an arrow in $\mathcal{G}_1(W)$. The vertices in $c$ are pairwise disjoint since otherwise we could take a sub-cycle of $c$ containing some arrow of $\mathcal{G}_1(W)$ contradicting the minimality of $r$. If for some $i:0\leq i <r$ we have that $v_{i}\cap v_{i+1}\cap v_{i+2} = \{x\}$, then there exists $y,z,t$ pairwise distinct such that $v_{i}=\{x,y\}$, $v_{i+1}=\{x,z\}$ and $v_{i+2}=\{x,t\}$. Thus $\mbox{Pr}(x|y)\leq \mbox{Pr}(x|z)\leq \mbox{Pr}(x|t)$ and the arrow $v_{i}\rightarrow v_{i+2}$ is also in $\mathcal{G}(W)$. If some of the arrows $v_{i}\rightarrow v_{i+1}$ or $v_{i+1}\rightarrow v_{i+2}$ is in $\mathcal{G}_1(W)$ then $v_{i}\rightarrow v_{i+2}$ is also in $\mathcal{G}_1(W)$, so we could substituting these two arrows for the last obtaining a new cycle, whose reverse is not in $\mathcal{G}(W)$ and length $r-1$ which contradict the minimality of $r$. Therefore $v_{i}\cap v_{i+1}\cap v_{i+2} = \emptyset$ for all $i:0\leq i <r$ and there exists a sequence $x_0,x_1,\ldots,x_{r-1}\in \mathcal{X}$ such that $v_i=\{x_i,x_{i+1}\}$. This sequence satisfies $\mbox{Pr}(x_{i}|x_{i-1})\leq \mbox{Pr}(x_{i}|x_{i+1})$ for $i:0\leq i <r$ with at least one strict inequality (the corresponding to the edge in $\mathcal{G}_1(W)$).
\end{proof}

\begin{theorem}\label{MainTheorem2}
Let $n\geq 2$ and $(p,q)\in (0,1]^2$ with $p+q<1$. Then, the channel $W=BAC^n(p,q)$ is metrizable. 
\end{theorem}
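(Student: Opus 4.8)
The plan is to verify that $\mathcal{G}(W)$ is cycle-reverter and then invoke Theorem \ref{MainTheorem1}. Since the case $pq=0$ (the $Z$-channel) is already settled in \cite{FW16}, I assume $pq>0$, so that $p,q\in(0,1)$ and $p+q<1$ gives $1-p>q$ and $1-q>p$. By Lemma \ref{LemmaMinimal}, it suffices to show that every sequence $x_0,x_1,\ldots,x_{r-1}\in\mathbb{Z}_2^n$ ($r\geq 3$, indices mod $r$) with $x_i\neq x_{i+1}$ and $\mbox{Pr}(x_i|x_{i-1})\leq\mbox{Pr}(x_i|x_{i+1})$ for all $i$ in fact satisfies all of these relations with equality.

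The key device is to linearize the conditional probabilities by taking logarithms. For a fixed first coordinate $x$, introduce the statistics $d(x,y)=\#\{i:x_i=y_i=1\}$ (the overlap of the supports) and $c(x,y)=\#\{i:x_i=0,\ y_i=1\}=w(y)-d(x,y)$. A short computation from the factorization $\mbox{Pr}(x|y)=p^a(1-p)^b q^c(1-q)^d$, using $a=w(x)-d$ and $b=n-w(x)-c$, yields
$$\log\mbox{Pr}(x|y)=K(x)-\alpha\,c(x,y)+\beta\,d(x,y),\qquad \alpha=\log\tfrac{1-p}{q}>0,\ \ \beta=\log\tfrac{1-q}{p}>0,$$
where $K(x)$ depends on $x$ alone. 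Applying this with $x=x_i$ to both comparisons (the term $K(x_i)$ cancels), the hypothesis $\mbox{Pr}(x_i|x_{i-1})\leq\mbox{Pr}(x_i|x_{i+1})$ becomes $T_i:=\beta\,\Delta d_i-\alpha\,\Delta c_i\geq 0$, where $\Delta c_i=c(x_i,x_{i+1})-c(x_i,x_{i-1})$ and $\Delta d_i=d(x_i,x_{i+1})-d(x_i,x_{i-1})$.

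I would then close the argument by summing $T_i$ around the cycle. The crucial observation is that $d(x,y)$ is symmetric, since it counts common ones; hence $\sum_i d(x_i,x_{i+1})=\sum_i d(x_i,x_{i-1})$, giving $\sum_i\Delta d_i=0$. Using $c(x,y)=w(y)-d(x,y)$ together with $\sum_i\bigl[w(x_{i+1})-w(x_{i-1})\bigr]=0$ (a reindexing modulo $r$) gives likewise $\sum_i\Delta c_i=0$. Therefore $\sum_i T_i=\beta\sum_i\Delta d_i-\alpha\sum_i\Delta c_i=0$, and since each $T_i\geq 0$ they must all vanish; that is, $\mbox{Pr}(x_i|x_{i-1})=\mbox{Pr}(x_i|x_{i+1})$ for every $i$, as required. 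By Lemma \ref{LemmaMinimal} the graph $\mathcal{G}(W)$ is cycle-reverter, and Theorem \ref{MainTheorem1} then delivers metrizability.

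The main obstacle, and the heart of the matter, is choosing the right additive invariants: once $\log\mbox{Pr}$ is written in the affine form above with $c$ and $d$ recording the $(0,1)$ and $(1,1)$ coincidence patterns, everything reduces to the symmetry $d(x,y)=d(y,x)$ and a telescoping sum. Identifying these two statistics (rather than a single Hamming-type distance) is what forces the cancellation $\sum_i\Delta d_i=\sum_i\Delta c_i=0$; the remaining steps are routine.
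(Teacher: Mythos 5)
Your proposal is correct and is essentially the paper's own proof written additively: your identity $\log\mbox{Pr}(x|y)=K(x)-\alpha\,c(x,y)+\beta\,d(x,y)$ is precisely the logarithm of the paper's ratio formula (\ref{EqCociente}), and your step ``$\sum_i T_i=0$ with each $T_i\geq 0$ forces all $T_i=0$'' is the additive counterpart of the paper's telescoping product $\prod_{i}\mbox{Pr}(x_i|x_{i+1})/\mbox{Pr}(x_i|x_{i-1})=1$ contradicting a strict inequality. Both arguments make the identical reduction through Lemma \ref{LemmaMinimal} and Theorem \ref{MainTheorem1}, so the two proofs coincide in substance.
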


\begin{proof}
By Theorem \ref{MainTheorem1}, it is enough to prove that its associated graph $\mathcal{G}(W)$ is cycle-reverter. We assume, to the contrary, that this graph is not cycle-reverter and by Lemma \ref{LemmaMinimal} there exists a sequence $x_0,x_1,\ldots,x_{r-1} \in \mathcal{X}$ such that $x_i\neq x_{i+1}$ and
\begin{equation}\label{EqLessOrEqual}
\mbox{Pr}(x_i|x_{i-1})\leq \mbox{Pr}(x_i|x_{i+1}), \quad \forall i: 0\leq i < r,
\end{equation}
where the indices are taken modulo $r$ and where at least one of these inequality is strict. We write these conditional probability as $\mbox{Pr}(x_i|x_{i-1})=p^{a_i}(1-p)^{b_i}q^{c_i}(1-q)^{d_i}$ for $i:0\leq i <r$. Therefore $\mbox{Pr}(x_{i-1}|x_{i})=p^{c_i}(1-p)^{b_i}q^{a_i}(1-q)^{d_i}$ for $i:0\leq i <r$ and applying Equation \ref{EqCociente} we obtain
$$\frac{\mbox{Pr}(x_i|x_{i+1})}{\mbox{Pr}(x_i|x_{i-1})}= \frac{p^{c_{i+1}}(1-p)^{b_{i+1}}q^{a_{i+1}}(1-q)^{d_{i+1}}}{p^{a_i}(1-p)^{b_i}q^{c_i}(1-q)^{d_i}}$$
\begin{equation}\label{EqCociente2}
\qquad \qquad \quad =\left(\frac{1-p}{q} \right)^{b_{i+1}-b_{i}} \left(\frac{1-q}{p} \right)^{d_{i+1}-d_{i}},
\end{equation}
for $0\leq i <r$. Multiplying these $r$ inequalities we have 
$$\prod_{i=0}^{r-1} \frac{\mbox{Pr}(x_i|x_{i+1})}{\mbox{Pr}(x_i|x_{i-1})} =  \prod_{i=0}^{r-1} \left(\frac{1-p}{q} \right)^{b_{i+1}-b_{i}} \left(\frac{1-q}{p} \right)^{d_{i+1}-d_{i}} $$ $$ =  \left(\frac{1-p}{q} \right)^{\Sigma_{i=0}^{r-1}(b_{i+1}-b_{i})} \left(\frac{1-q}{p} \right)^{\Sigma_{i=0}^{r-1}(d_{i+1}-d_{i})}=1$$ because $b_{r}=b_{0}, d_{r}=d_{0}$ and $\Sigma_{i=0}^{r-1}(b_{i+1}-b_{i})=\Sigma_{i=0}^{r-1}(d_{i+1}-d_{i})=0$. But by (\ref{EqLessOrEqual}) this product is greater than $1$ (since at least one inequality is strict) which is a contradiction. Therefore $W$ is metrizable.
\end{proof}

\section{Concluding remarks and further problems}\label{SecFinal}

In this work we approach the problem of metrization for the $n$-fold BAC channels in the sense of the definition used in \cite{Seguin80} and \cite{FW16}.  
An existence proof and an algorithm to construct a metric matching to the BAC channels are provided. An interesting problem is to describe the set $D_1(n,p,q)$ of metrics matching to the channel $BAC^{n}(p,q)$. This set is non-empty by Theorem \ref{MainTheorem2} and closed under linear combinations with positive real coefficients, so we could look for a minimal generator for this set. Describing this set allows to choose good metrics according to a given criterion. One possible criterion could be to select the metric according to how easy is to compute it. Other possible criterion could be to select the metric according to how good it fits the channel in the sense of the next definition.
\begin{definition}
Let $W:\mathcal{X}\rightarrow\mathcal{X}$ be a channel and $d$ be a metric compatible to $W$. The metric $d$ is matched to the channel $W$ with order $n$ if $d^m$ is a metric matched to $W^m$ for all $m: 1\leq m \leq n$, where $d^m: \mathcal{X}^m \rightarrow \mathcal{X}^m$ is given by 
$$d^m(x,y)=\sum_{k=1}^m d(x_i,y_i)\qquad \textrm{and} \qquad \mbox{Pr}_{W^{m}}(x|y)=\prod_{k=1}^m \mbox{Pr}_{W}(x_i|y_i).$$  If $d$ is matched to $W$ with order $n$ for all $n\geq 1$, we say that the metric $d$ matches completely to $W$.
\end{definition}

We also define the order of metrizability of $W$ as the maximum $n$ (possibly infinite) for which there exists a metric matched to $W$ with order $n$. It would be interesting to determine the order of metrizability for the BAC channels or at least to determine which of these channels admit a matched metric with order $n\geq 2$. In \cite{Seguin80} it was approached the problem of determining when a channel $W:\mathcal{X}\rightarrow \mathcal{X}$ is completely metrizable for alphabets of lengths $\#\mathcal{X}=2,3$ and was proved that the channel $BAC^{1}(p,q)$ has a matched metric with order $\infty$ if and only if $p=q$ (symmetric channel). In this case the Hamming metric matches completely to the channel. 

\section*{acknowledgements}
The author would like to thank the anonymous reviewers for their valuable comments and suggestions, Sueli Costa for her support and suggestions, and CNPq and FAPESP for their support.

\end{document}